\newtheorem{problem}{Problem}
\title{Problems on One Way Road Networks}
\author{Jammigumpula Ajaykumar\thanks{\tt Tezpur University, ak.jammi@gmail.com}
        \and
        Avinandan Das\thanks{\tt Tezpur University, adas33745@gmail.com}
        \and
        Navaneeta Saikia\thanks{\tt Tezpur University, saikia.navaneeta@gmail.com}
        \and
        Arindam Karmakar\thanks{\tt Tezpur University, karmarind@gmail.com}}
\begin{document}
\thispagestyle{empty}
\maketitle

\begin{abstract}
Let $OWRN = \left\langle W_x,W_y \right\rangle$ be a One Way Road Network
where $W_x$ and $W_y$ are the sets of directed horizontal and vertical roads respectively.
$OWRN$ can be considered as a variation of directed grid graph. The intersections of the  horizontal and vertical roads are the vertices of $OWRN$ and any two consecutive vertices on a road are connected by an edge. In this work,  we analyze the problem of  collision free traffic configuration in a $OWRN$. A traffic configuration is a two-tuple  $TC=\left\langle OWRN,  C\right\rangle$, where  $C$ is a set of cars travelling on a pre-defined  path. We prove that finding a maximum cardinality subset $C_{sub}\subseteq C$ such that $TC=\left\langle OWRN,  C_{sub}\right\rangle$ is collision-free, is NP-hard. Lastly we investigate the properties of connectedness, shortest paths in a $OWRN$.
\end{abstract}

\section{Introduction}
The rapid development in the existing  motor vehicle technology
has led to the increase in demand of automated vehicles,
which are in themselves capable of various decision activities such as motion-controlling , path planning etc .This has motivated many to address a large number of algorithmic and optimisation problems.
The 1939 paper by Robbins\cite{rob}, which gives the idea of \textit{orientable} graphs and the paper by Masayoshi et.al \cite{masa}, are to a certain extent an inspiration to formulating our graph network. The work by Dasler and Mount \cite{arX}, which basically considers motion coordination of  a set of vehicles at a \textit{traffic-crossing} (intersection), has been a huge motivation and a much closer approach to that of ours. But unlike their work, we consider a much simpler version of a grid graph and mainly concentrate on analysing essential properties , and deriving suitable algorithms and structures to have a collision-free movement of traffic in the given graph network. Further, our work also mentions the possible shortest path configurations in our defined graph. We now discuss some definitions and notations that are referred to in the rest of the paper.

\subsection{Definitions and Notations}
A road is a directed line, which is either parallel to X-axis ($X_i$) or Y-axis ($Y_i$) and it is uniquely defined by its direction and distance from the corresponding parallel axis. Here direction is the constraint which restricts the movement on the road. 

Formally, a \textit{road} is defined as a 2-tuple, $X_i =\left\langle d_i,x_i \right\rangle$, $Y_j =\left\langle d_j,y_j \right\rangle$ , where $i,j$ are the indices with respect to their parallel axis, $d_k$ is the direction of the road i.e., $ d_k \in  \left\{{0,1}  \right\} $ (where 0 represents  $-ve$ direction and 1 represents $+ve$ direction of the respective axis) , and $x_i$ is the distance of the road $X_i$ from X-axis, similarly for $y_j$.

We define a \textit{One Way Road Network (OWRN)} as a network with a set of $n$ horizontal and $m$ vertical Roads. Formally a OWRN is a 2-tuple, $OWRN = \left\langle W_x,W_y \right\rangle$, 
where, $W_x = \left\lbrace X_1, X_2, X_3\dots, X_n \right\rbrace $, 
$W_y = \left\lbrace Y_1, Y_2, Y_3\dots, Y_m \right\rbrace $.

\begin{figure}[htb!]
\centering
\includegraphics[width=0.8\linewidth]{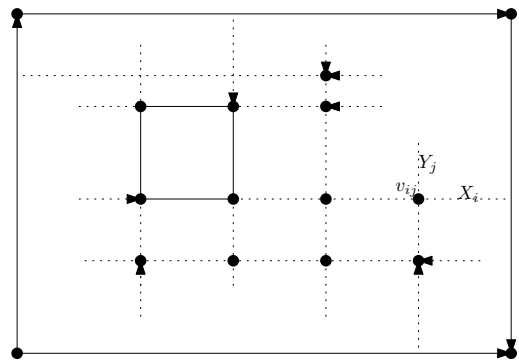}
\caption{Graphical representation of a $OWRN$}
\end{figure}

A \textit{junction} or \textit{vertex} $v_{ij}$ is defined as the intersection of $X_i$ and $Y_j$. Formally, $v_{ij} \in (W_x \times W_y)$.

An \textit{edge} is a connection between two adjacent vertices of a road and all the edges on a road are in the same direction as that of the road.


The \textit{boundary roads} of a OWRN are the outermost roads, i.e., $X_1, X_n, Y_1$ and $Y_m $. In this paper we term each vertex on the boundary roads as \textit{boundary vertex}, i.e., all the vertices with degrees 2 and 3.

A \textit{vehicle} $c$ is defined as a 3-tuple $\left\langle t,s,P \right\rangle$ , where $t$ is the starting time of the vehicle, $s$ is the speed of the vehicle , which is constant throughout the journey and each vehicle moves non-stop from its starting vertex to destination vertex (unless a collision occurs), and $P$ is the path to be travelled by the vehicle.

A \textit{path} $P_r$ of a vehicle $c_r$ is defined as the ordered set of vertices through which it traverses the OWRN, Formally, $P = \left\lbrace v_1, v_2, v_3\dots, v_l \right\rbrace$ , $\forall i$ , $v_{i} \in (W_x \times W_y)$,  and $\forall i, 0 < i < l $, $v_{i} \rightarrow v_{i+1}$.

A \textit{traffic configuration} is defined as a collection of vehicles over a OWRN. Formally a $TC$ is a 2-tuple, $TC=\left\langle OWRN,C \right\rangle$, where $C$ is the set of vehicles
$\left\lbrace c_1, c_2, c_3\dots,c_k \right\rbrace$.

Now, we define a \textit{collision} as two vehicles $c_i$ and $c_j$  $(i \neq j)$  reaching the same vertex orthogonally at the same time. So a \textit{collision-free} traffic configuration is a TC without any collisions.

\section{Results}
Before considering the traffic configuration problem, we define the connectivity of a OWRN.

\subsection{Connectivity of a One Way Road Network}
In this section, we consider a general OWRN of $n \times m$ roads, and show the conditions for it to be strongly-connected.
 
The reachability to (and from) the non-boundary vertices is evident from the following lemmas. 
\begin{lemma}
\label{lem:1}
For every non-boundary vertex $v_{ij}$, $1 < i < n$, $1 < j < m$ there exists $e$,$f$ such that  we can always reach the boundary vertices $v_{ie}$, $v_{fj}$ from $v_{ij}$.
\end{lemma}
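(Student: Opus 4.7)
The plan is to exploit the fact that at the interior vertex $v_{ij}$ two directed roads cross: the horizontal road $X_i$ and the vertical road $Y_j$. Since each road is a straight directed line whose edges all point in the single direction $d_i$ (resp.\ $d_j$), simply following the road from $v_{ij}$ until it terminates gives a directed path to a boundary vertex.

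More precisely, first I would argue the horizontal case. Fix the road $X_i = \langle d_i, x_i\rangle$. If $d_i = 1$ the edges along $X_i$ go in the $+y$ direction, so the sequence $v_{ij}, v_{i(j+1)}, v_{i(j+2)}, \ldots, v_{im}$ is a directed path in the OWRN, and its endpoint $v_{im}$ lies on the boundary road $Y_m$. If instead $d_i = 0$ the analogous path $v_{ij}, v_{i(j-1)}, \ldots, v_{i1}$ terminates at $v_{i1}$ on the boundary road $Y_1$. In either case we obtain a valid choice of $e \in \{1,m\}$ with $v_{ie}$ a boundary vertex reachable from $v_{ij}$.

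Next I would repeat the same argument verbatim for the vertical road $Y_j = \langle d_j, y_j\rangle$: following $Y_j$ in its prescribed direction from $v_{ij}$ terminates at $v_{nj}$ (if $d_j = 1$) or $v_{1j}$ (if $d_j = 0$), giving $f \in \{1,n\}$ such that $v_{fj}$ is a boundary vertex reachable from $v_{ij}$.

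There is no real obstacle here; the statement is essentially an observation that directed roads are monotone paths, so ``walk along the road'' suffices. The only subtlety worth mentioning explicitly is that the assumptions $1 < i < n$ and $1 < j < m$ guarantee $X_i$ and $Y_j$ are non-boundary roads, so the terminal vertices of these walks are genuinely distinct from $v_{ij}$ and lie on $Y_1,Y_m,X_1,$ or $X_n$, which by definition are boundary vertices of the OWRN.
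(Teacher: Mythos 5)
Your proof is correct and is essentially the paper's own argument: follow each of the two roads through $v_{ij}$ in its prescribed direction until it hits the boundary, yielding $e\in\{1,m\}$ and $f\in\{1,n\}$ (you even get these index ranges right where the paper's proof has them transposed). The only quibble is the phrase ``edges along $X_i$ go in the $+y$ direction,'' which should read $+x$ for a road parallel to the X-axis, though the vertex sequence you write down is the correct one.
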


\begin{proof}
We prove this lemma by considering the two roads which intersect to form the vertex $v_{ij}$.
\begin{enumerate}
\item
For the road $X_i$, if $d_i = 0$ then by definition we can reach $v_{i1}$ from $v_{ij}$, i.e., $e = 1$.
Otherwise we can reach $v_{in}$ from $v_{ij}$, i.e., $e = n$.
\item
For the road $Y_j$, if $d_j = 0$ then by definition we can reach $v_{1j}$ from $v_{ij}$, i.e., $f = 1$.
Otherwise we can reach $v_{mj}$ from $v_{ij}$, i.e., $f = m$.
\end{enumerate}
From the above conditions we can clearly see that for any non-boundary vertex $v_{ij}$, $\exists {e,f}$ such that $v_{ie}$, $v_{ej}$ are reachable from $v_{ij}$.

\end{proof}

\begin{lemma}
\label{lem:2}
For every non-boundary vertex $v_{ij}$ there exists $e$,$f$ such that $v_{ij}$ is reachable from the boundary vertices $v_{ie}$, $v_{fj}$.
\end{lemma}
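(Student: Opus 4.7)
The plan is to mirror the case analysis used in Lemma~\ref{lem:1}, but with the roles of source and destination reversed. Instead of asking where traffic starting at $v_{ij}$ can end up along each of the two incident roads, I would ask from which end of each road traffic must enter in order to pass through $v_{ij}$. Because each of $X_i$ and $Y_j$ is fully directed, only one of its two boundary endpoints is ``upstream'' of $v_{ij}$, and which endpoint that is is forced by the direction bit.

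First, I would consider the horizontal road $X_i$ containing $v_{ij}$. If $d_i = 0$, the edges of $X_i$ point from larger column index to smaller, so every vertex $v_{ik}$ with $k > j$ can reach $v_{ij}$ along $X_i$; in particular the boundary vertex $v_{in}$ does, so I set $e = n$. If $d_i = 1$, by symmetry $e = 1$ works, witnessed by $v_{i1}$. Next I would apply the same reasoning to the vertical road $Y_j$, splitting on $d_j$: if $d_j = 0$ then $f = m$ is witnessed by $v_{mj}$, and if $d_j = 1$ then $f = 1$ is witnessed by $v_{1j}$.

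Since $v_{ij}$ is assumed non-boundary we have $1 < i < n$ and $1 < j < m$, so the endpoints $v_{ie}$ and $v_{fj}$ chosen above are distinct from $v_{ij}$ and genuinely lie on the boundary. The required paths are simply the directed sub-path of $X_i$ from $v_{ie}$ to $v_{ij}$ and the directed sub-path of $Y_j$ from $v_{fj}$ to $v_{ij}$; both stay on a single road and respect its direction by construction. There is essentially no obstacle: the argument is the directional dual of Lemma~\ref{lem:1}, and the only point requiring care is to ensure the chosen $e$ and $f$ lie on the correct (upstream) side of $v_{ij}$, which the case split on $d_i$ and $d_j$ settles directly.
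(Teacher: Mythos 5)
Your proof is correct and is exactly the argument the paper intends: the paper's own proof of Lemma~\ref{lem:2} simply states that it is analogous to Lemma~\ref{lem:1}, and your case split on $d_i$ and $d_j$ with the upstream boundary endpoints chosen as $e=n$ or $e=1$ (resp.\ $f=m$ or $f=1$) is precisely that analogous, direction-reversed argument spelled out.
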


\begin{proof}
The proof of this lemma is analogous to that of Lemma~\ref{lem:1}.\end{proof}

\begin{theorem}
A One Way Road Network is strongly-connected iff the boundary roads form a cycle.
\end{theorem}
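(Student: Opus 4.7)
The plan is to prove both implications separately, and the argument is clean enough that neither direction requires heavy machinery beyond the two lemmas just proved and a corner-by-corner inspection.

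For the ($\Leftarrow$) direction, suppose the four boundary roads $X_1, X_n, Y_1, Y_m$ together form a directed cycle $C$. I would first observe that $C$ visits every boundary vertex, so any two boundary vertices are mutually reachable along $C$. For an arbitrary non-boundary vertex $v_{ij}$, Lemma~\ref{lem:1} produces a directed path from $v_{ij}$ to some boundary vertex, and Lemma~\ref{lem:2} produces a directed path from some (possibly different) boundary vertex to $v_{ij}$. Concatenating these paths with the appropriate arc of $C$ yields a directed path in each direction between any prescribed pair of vertices, which is exactly strong connectivity.

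For the ($\Rightarrow$) direction, I would focus on the four corners $v_{1,1}, v_{1,m}, v_{n,1}, v_{n,m}$, each of which lies on exactly two roads and hence has degree~$2$. Strong connectivity forces every vertex to have in-degree and out-degree at least~$1$, so at each degree-$2$ corner exactly one incident edge must be incoming and the other outgoing. I would then propagate this constraint around the perimeter: fixing the direction of $X_1$ (say, positive, so $v_{1,1}\!\to\! v_{1,2}$) makes the $X_1$-edge at $v_{1,1}$ outgoing, which forces $Y_1$ to be directed so that $v_{2,1}\!\to\! v_{1,1}$; the corner $v_{1,m}$ then forces $Y_m$ to point away from $v_{1,m}$; the corner $v_{n,1}$ forces $X_n$ to be oriented opposite to $X_1$; and the fourth corner $v_{n,m}$ is consistent automatically. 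Concatenating the four oriented boundary roads in this forced arrangement traces a directed cycle around the perimeter, and the symmetric case with $X_1$ oriented negatively gives the reversed cycle.

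The only subtlety worth checking, which I expect to be the main (mild) obstacle, is to confirm that the non-corner boundary vertices do not obstruct the argument. Such a vertex has degree~$3$: two of its edges lie on the boundary road it belongs to and, since that road has a single direction, one of those edges enters and the other leaves the vertex; the third edge is incident along the perpendicular interior road. The first two already supply the required in- and out-edges, so strong connectivity places no direction constraint on the third. Hence the four corners really do constitute the whole bottleneck, their constraints collapse into the two possible cyclic orientations of the boundary, and the equivalence follows.
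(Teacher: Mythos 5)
Your proposal is correct and follows essentially the same route as the paper, which splits the theorem into Lemma~\ref{lem:3} (sufficiency, via Lemmas~\ref{lem:1} and~\ref{lem:2} plus reachability around the boundary cycle) and Lemma~\ref{lem:4} (necessity, via the degree-$2$ corner vertices). The one substantive difference is in the necessity direction: the paper argues by contradiction and simply \emph{asserts} that a non-cyclic boundary yields a corner with both edges incoming or both outgoing, whereas your propagation of the in/out constraint around the four corners (together with the observation that degree-$3$ boundary vertices impose no constraint) actually justifies that assertion, so your version fills a small gap in the paper's Lemma~\ref{lem:4}.
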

\begin{proof}
The proof of this theorem follows from Lemma~\ref{lem:3} and Lemma~\ref{lem:4}.
\end{proof}

\begin{lemma}
\label{lem:3}
If all the boundary vertices of a OWRN form a cycle, then it is strongly-connected.
\end{lemma}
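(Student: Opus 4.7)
The plan is to use the boundary cycle as a universal highway that lets us combine the two halves of reachability supplied by Lemmas~\ref{lem:1} and~\ref{lem:2}. The first step is to observe that the hypothesis ``boundary roads form a cycle'' means the four roads $X_1,X_n,Y_1,Y_m$ are oriented consistently, so every boundary vertex lies on one common directed cycle. In particular, for any two boundary vertices $b$ and $b'$ there is a directed path from $b$ to $b'$ obtained by traversing the cycle forward (at most once all the way around). This immediately settles strong connectivity restricted to the boundary.

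Next, I would lift this to arbitrary pairs $u,v$ by a short case analysis on whether each endpoint is boundary or non-boundary. If both are non-boundary, Lemma~\ref{lem:1} applied to $u$ yields a boundary vertex $b_u$ reachable from $u$, and Lemma~\ref{lem:2} applied to $v$ yields a boundary vertex $b_v$ from which $v$ is reachable; concatenating $u \rightsquigarrow b_u$, the cycle path $b_u \rightsquigarrow b_v$, and $b_v \rightsquigarrow v$ produces the desired $u \rightsquigarrow v$. If $u$ is boundary and $v$ is non-boundary, drop the first segment and use only Lemma~\ref{lem:2}; symmetrically, if $u$ is non-boundary and $v$ is boundary, drop the last segment and use only Lemma~\ref{lem:1}; if both are boundary, the cycle alone suffices.

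The only delicate point, which I expect to be the main (small) obstacle, is verifying that the vertices $v_{ie}$ and $v_{fj}$ produced by Lemma~\ref{lem:1} (and the analogous ones from Lemma~\ref{lem:2}) actually sit on the boundary cycle. This is immediate because $e\in\{1,m\}$ forces $v_{ie}$ to lie on $Y_1$ or $Y_m$ and $f\in\{1,n\}$ forces $v_{fj}$ to lie on $X_1$ or $X_n$, and all four of these roads belong to the cycle. A secondary detail is the possibility that $b_u=b_v$ or that $u$ already coincides with $b_u$, but then the corresponding sub-path is simply empty, so the argument degenerates gracefully. With these routine checks, strong connectivity follows for every pair $u,v$, completing the lemma.
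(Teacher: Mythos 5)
Your proposal is correct and follows essentially the same route as the paper's proof: a case analysis on whether each endpoint is a boundary or non-boundary vertex, using the boundary cycle to connect boundary vertices and Lemmas~\ref{lem:1} and~\ref{lem:2} to route non-boundary endpoints onto and off of the cycle. The extra check that the vertices produced by those lemmas actually lie on the boundary cycle is a detail the paper leaves implicit, but it does not change the argument.
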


\begin{proof}
Given two vertices $v_{ij}$, $v_{kl}$ in a OWRN, to reach from $v_{ij}$ to $v_{kl}$, we have four different possibilities
\begin{enumerate}
\item
\textit{ Both boundary vertices}: Any boundary vertex is reachable from any other boundary vertex, since they all form a cycle. Therefore a path exists.
\item
\textit{$v_{ij}$ non-boundary vertex, $v_{kl}$ boundary vertex}: From \textit{Lemma~\ref{lem:1}} we know that, from any non-boundary vertex $v_{ij}$ we can always reach a boundary vertex, and from that vertex we can reach $v_{kl}$ as shown in $1$. Therefore a path exists.
\item
\textit{$v_{ij}$ boundary vertex, $v_{kl}$ non-boundary vertex}: From \textit{Lemma~\ref{lem:2}} we know that any non-boundary vertex $v_{kl}$ is always reachable from a boundary vertex, and which in turn is reachable from $v_{ij}$ as shown in $1$. Therefore a path exists.
\item
\textit{Both non-boundary vertices}: From $1$, $2$ and $3$ it is implied that there exists a path in this case too.

\end{enumerate}

\end{proof}

\begin{lemma}
\label{lem:4}
If a given One Way Road Network is strongly-connected, then all the boundary vertices form a cycle.
\end{lemma}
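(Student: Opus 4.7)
The plan is to prove the contrapositive: if the four boundary roads of the OWRN do not form a directed cycle, then the network is not strongly-connected. The argument pivots on the four corner vertices $v_{11}$, $v_{1m}$, $v_{n1}$, $v_{nm}$. Each such vertex lies on exactly two boundary roads (one horizontal and one vertical) and sits at an endpoint of each of those roads, so its total degree in the OWRN is exactly~$2$.

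First I would classify each corner vertex according to the orientations of its two incident edges: a \emph{sink} if both edges point into it, a \emph{source} if both point out of it, and a \emph{pass-through} if exactly one points in and one points out. The key observation is that a source corner has in-degree $0$ while a sink corner has out-degree $0$; either situation immediately contradicts strong connectivity, provided the OWRN contains any other vertex (which holds whenever $n \geq 2$ and $m \geq 2$). Hence, under the hypothesis of strong connectivity, all four corners must be pass-throughs.

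Next I would carry out a short case analysis showing that requiring every corner to be a pass-through forces the four boundary directions $d_{X_1},\,d_{X_n},\,d_{Y_1},\,d_{Y_m}$ into one of exactly two consistent cyclic configurations. Writing out the pass-through condition at each corner yields $d_{X_1} \neq d_{Y_1}$ at $v_{11}$, $d_{X_1} = d_{Y_m}$ at $v_{1m}$, $d_{X_n} = d_{Y_1}$ at $v_{n1}$, and $d_{X_n} \neq d_{Y_m}$ at $v_{nm}$; fixing $d_{X_1}$ then determines the remaining three values uniquely, producing either a clockwise or a counterclockwise directed cycle around the perimeter. Because each boundary road is traversed according to its own direction and the corner turns splice neighbouring roads together coherently, the resulting global cycle visits every boundary vertex.

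The principal obstacle is really only bookkeeping: one has to translate ``the value of $d$ on road $X_i$'' into ``edge points into'' or ``edge points out of'' at each specific corner, and this sign flips between the two endpoints of the same road. Once those four local pass-through conditions are recorded correctly, they form a rigid system, and the lemma follows immediately by taking the contrapositive.
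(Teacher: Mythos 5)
Your proof is correct and rests on the same key idea as the paper's: a degree-2 corner vertex that is a source or a sink immediately violates strong connectivity. If anything, your version is more complete than the paper's, since the paper simply asserts that a non-cyclic boundary yields such a corner, whereas you explicitly verify via the four pass-through constraints ($d_{X_1} \neq d_{Y_1}$, $d_{X_1} = d_{Y_m}$, $d_{X_n} = d_{Y_1}$, $d_{X_n} \neq d_{Y_m}$) that forbidding source and sink corners forces the boundary directions into one of the two cyclic orientations.
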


\begin{proof}
Let us assume on the contrary that the boundary vertices do not form a cycle in the OWRN.
Then there will exist a boundary vertex of degree 2 (corner vertex) such that either both the boundary roads are incoming or outgoing.
\begin{enumerate}
\item
\textit{Both incoming roads}: In this case, we will not be able to reach any other vertex from that vertex.
\item
\textit{Both outgoing roads}: In this case, we will not be able to reach that vertex from any other vertex. 
\end{enumerate}
Therefore, the OWRN is not strongly-connected.

Hence, by contradiction, we can claim that the boundary vertices of a strongly-connected OWRN will always form a cycle.
\end{proof}

\subsection{Traffic Configuration}
We now define the traffic configuration problem in a connected OWRN.
\begin{problem}
Given a traffic configuration $\left\langle OWRN,C \right\rangle $ , our objective is to find a maximum cardinality subset $C_{sub}$ , $C_{sub} \subseteq C$ , such that the new traffic configuration $\left\langle OWRN,C_{sub} \right\rangle $ is \textbf{collision-free}.
\end{problem}

In the following sections we discuss the hardness of the above problem, and also mention some of the restricted versions of the same.

\subsubsection{Hardness of Collision-Free Traffic Configuration}
In this section we show that finding a solution to the traffic configuration problem is \textit{\textbf{NP-Hard}}. For this , we have the following theorem.
\begin{theorem}
Given an undirected graph $G =\left\langle V,E \right\rangle $ , there exists a traffic configuration $\left\langle OWRN,C \right\rangle $ , computable in polynomial-time, such that the cardinality of Maximum Independent Set of $G$ is $k$ iff the maximum cardinality of $C_{sub}$ is $k$.
\end{theorem}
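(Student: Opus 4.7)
\medskip
\noindent\textbf{Proof plan.}
The plan is to reduce the Maximum Independent Set problem, known to be NP-hard, to the collision-free traffic configuration problem. Given an undirected graph $G = \langle V, E\rangle$ with $|V|=n$, I will construct in polynomial time a traffic configuration $\langle OWRN, C\rangle$ with exactly one car $c_v$ per vertex $v \in V$, engineered so that two cars $c_u$ and $c_v$ can simultaneously belong to any collision-free subset if and only if $(u,v)\notin E$. Granting such a correspondence, a subset $C_{sub}\subseteq C$ is collision-free precisely when the corresponding vertex set is independent in $G$, and so the maximum cardinality of $C_{sub}$ equals the size of a maximum independent set of $G$, giving both directions of the ``$k$ iff $k$'' statement.

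First I would fix a sufficiently large grid, of dimensions polynomial in $n$ and $|E|$, with the horizontal and vertical road orientations chosen so that the boundary forms a directed cycle; by Lemma~\ref{lem:3} this guarantees strong connectivity, so every prescribed path is realisable. For each $v_i\in V$ the car $c_i$ would be assigned a dedicated zig-zag or $L$-shaped path through a private corridor of the grid, all cars sharing a common unit speed. For each edge $(u,v)\in E$ I would designate a private \emph{edge-gadget} intersection $q_{uv}$ that lies on the horizontal leg of $c_u$'s path and on the vertical leg of $c_v$'s path, and tune the start times $t_u, t_v$ so that both cars reach $q_{uv}$ at the same instant. This is an orthogonal simultaneous arrival and therefore a forced collision, so at most one of $c_u, c_v$ can survive in any collision-free $C_{sub}$, which encodes the edge constraints of $G$.

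The main obstacle will be the converse direction: showing that for every \emph{non}-edge $(u,v)\notin E$ the cars $c_u$ and $c_v$ never share a vertex at the same time. This is a no-coincidence/scheduling argument. The intended tool is to lay out the edge gadgets along each car's route in a carefully chosen order, and to introduce per-car offsets, so that the $(\text{position},\text{time})$ trace of $c_u$ meets the trace of $c_v$ precisely at $q_{uv}$ when $(u,v)\in E$ and is disjoint otherwise. If the most natural layout produces spurious collisions, I expect to remove them either by blowing up the grid spacing or by choosing pairwise-incompatible arithmetic offsets for the start times, so that every ``shared'' intersection other than a designated $q_{uv}$ is visited by the two cars at mismatched time steps. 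Note that by the definition of collision, arrivals along the same road (both horizontal or both vertical) do not count, so only the orthogonal crossings need to be controlled.

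Finally I would verify that the whole construction has size polynomial in $n+|E|$: at most $O(|E|)$ gadgets per car, $O(n|E|)$ roads in the $OWRN$, and only $n$ cars of polynomial-length paths, so each can be written down in polynomial time. Combining polynomial-time constructibility with the forward direction (each edge forces a collision) and the converse direction (no spurious collisions) yields the claimed equivalence between a maximum independent set of $G$ of size $k$ and a maximum collision-free $C_{sub}$ of size $k$.
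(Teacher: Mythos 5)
Your overall route is the same as the paper's: a polynomial-time reduction from Maximum Independent Set in which each vertex of $G$ becomes a car, each edge of $G$ is forced to become an orthogonal simultaneous arrival, and collision-free subsets of $C$ correspond exactly to independent sets of $G$ (this last step is the paper's \textit{Lemma~\ref{lem:8}}). Where you differ is in the gadgetry. The paper first builds a traffic configuration for the complete graph $K_n$ (\textit{Lemma~\ref{lem:5}}): all $n$ cars start at time $0$ with the same speed on L-shaped paths in a $2n\times n$ grid, so that $c_i$ and $c_j$ ($i<j$) collide at the single vertex $v_{(n-i+1)(j)}$; it then refines the grid and, for every non-edge, inserts small detours (``delays'') into the later car's path so that the two cars miss each other at that vertex, with a counting invariant (Properties 1--3 and \textit{Lemma~\ref{lem:6}}) guaranteeing that at most two delays ever need to fit between consecutive collision vertices. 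You instead propose private per-edge gadget intersections $q_{uv}$ together with per-car start-time offsets. Both variants live inside the model, but note a degrees-of-freedom issue in yours: each car has only one start time while it may participate in many edges, so you cannot tune $t_u, t_v$ independently for each edge; the adjustment must be carried by \emph{where along each path} the gadget $q_{uv}$ sits, which is essentially what the paper's delay mechanism accomplishes.

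The genuine gap is the one you yourself flag: the converse direction, that non-adjacent cars never meet orthogonally at the same instant, is left at the level of ``I expect to remove them by blowing up the grid spacing or by choosing pairwise-incompatible arithmetic offsets.'' This is the entire technical content of the reduction --- the forward direction (an edge forces a collision) is easy --- and without a concrete scheduling invariant the proof is not complete. The paper's resolution is to make all paths pairwise-intersecting by construction and then show that the required detour counts can always be realised locally between consecutive collision vertices (\textit{Lemma~\ref{lem:6}}), so that the set of colliding pairs is exactly $E$. If you pursue your variant, you need an analogous explicit argument: an ordering of the gadgets along each route together with a spacing and offset scheme for which you can \emph{prove} that every shared intersection other than a designated $q_{uv}$ with $(u,v)\in E$ is reached by the two cars at distinct times, and a verification that the resulting grid and paths remain of size polynomial in $n+|E|$ (the analogue of \textit{Lemma~\ref{lem:7}}).
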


To prove this theorem, we reduce Maximum Independent Set problem to the Traffic Configuration problem, which is achieved with the help of the following lemmas and algorithms. 
 
\begin{lemma}
\label{lem:5}
For any complete graph $K_n$, there exists a traffic configuration $TC$, such that every vertex in $K_n$ has a respective car, and for every edge in $K_n$ there is a collision between the respective vehicles.
\end{lemma}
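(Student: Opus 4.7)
The plan is to construct an explicit OWRN together with an $L$-shaped path for each of the $n$ cars and a carefully staggered start-time, so that for every pair of cars there is a junction at which their paths meet and where they arrive simultaneously from orthogonal directions.

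I would first set up the OWRN as follows. Place $n+1$ horizontal roads $X_0, X_1, \ldots, X_n$ at heights $y = 0, 1, \ldots, n$, all directed rightwards, and $n$ vertical roads $Y_1, \ldots, Y_n$ at abscissae $x = 1, \ldots, n$, all directed upwards; the junctions are the lattice points $(j, i)$. For each $k \in \{1, \ldots, n\}$, I would assign car $c_k$ the $L$-shaped path that starts at $(k, 0)$, goes up along $Y_k$ to $(k, k)$, then turns right onto $X_k$ and continues to $(n, k)$. All cars move with unit speed, and $c_k$ is given start-time $t_k = k$.

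To verify that every edge of $K_n$ yields a collision, I would fix $i < j$ and check the four possible segment-pairs. The only junction common to both paths turns out to be $(j, i)$: the two vertical legs sit on different columns, the two horizontal legs sit on different rows, and the vertical leg of $c_i$ never climbs above $y = i < j$, so it does not meet the horizontal leg of $c_j$ at $y = j$. The remaining pair, the horizontal leg of $c_i$ at $y = i$ and the vertical leg of $c_j$ at $x = j$, meets at $(j, i)$, where $c_i$ is moving rightwards while $c_j$ is still moving upwards, so the meeting is orthogonal. The travel time of $c_i$ from $(i, 0)$ to $(j, i)$ is $i + (j - i) = j$, so $c_i$ arrives at time $t_i + j = i + j$; the travel time of $c_j$ from $(j, 0)$ to $(j, i)$ is $i$, so $c_j$ arrives at time $t_j + i = i + j$. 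The arrival times agree, and the meeting is orthogonal, so $c_i$ and $c_j$ collide.

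The step most likely to require care is the arrival-time analysis, because a single rule for the start-times must simultaneously reconcile $\binom{n}{2}$ synchronization conditions. The offsets $t_k = k$ are chosen precisely so that the shift $t_j - t_i = j - i$ cancels the extra distance $c_i$ must travel horizontally to reach the corner $(j, i)$, yielding the common arrival time $i + j$ for every pair. With this verification in hand, the construction provides an OWRN with one car per vertex of $K_n$ and a collision corresponding to every edge, as required.
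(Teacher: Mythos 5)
Your construction is correct and follows essentially the same strategy as the paper: one $L$-shaped (single-turn) path per vertex of $K_n$, with the pair $c_i$, $c_j$ ($i<j$) made to collide orthogonally at the unique junction where $c_i$'s horizontal leg crosses $c_j$'s vertical leg. The only difference is cosmetic --- the paper synchronizes the arrivals by staggering the starting \emph{positions} (car $c_i$ begins at row $n+i-1$, all with start time $0$), whereas you stagger the start \emph{times} ($t_k=k$) with aligned starting positions; both yield the same pairwise arrival-time cancellation.
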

\begin{proof}
We prove this lemma using proof by construction. The following steps show how to construct a $TC$ from $K_n$.
\begin{enumerate}
\item
We construct a OWRN of $2n \times n$ roads, with $2n$ horizontal roads and $n$ vertical roads in which
\begin{enumerate}
\item
For the road $X_i$, $d_i = \begin{cases}
1 & 1 < i \leq 2n \\
0 & i = 1
\end{cases}$
\\and $x_i = \begin{cases}
0 & i = 1 \\
x_{i-1} + \delta & 1 < i \leq 2n
\end{cases}$
\item
For the road $Y_j$, $d_j = \begin{cases}
0 & 1 < j \leq n \\
1 & j = 1
\end{cases}$
\\and $y_j = \begin{cases}
0 & j = 1 \\
y_{j-1} + \delta & 1 < j \leq n
\end{cases}$
\end{enumerate}
where $\delta$ is a numeric constant.
\item
The set of vehicles $C$ is defined as $ \left\lbrace c_1, c_2, c_3\dots,c_n \right\rbrace$ and for each vehicle $c_i \in C$ we assume
\begin{enumerate}
\item
The start time to be $0$ and the velocity to be $\omega$.
\item
$P_i = \left\lbrace{v_{ri},v_{(r-1)i}\dots,v_{qi},v_{q(i+1)}\dots,v_{qn}}\right\rbrace$,  where $r = n+i-1$, $q = n-i+1$.
\item
$c_i = \left\langle 0,\omega,P_i \right\rangle$
\end{enumerate}
\item
Now we can observe that two vehicles $\left\lbrace c_i,c_j\right\rbrace$ $\in C$ collide at vertex $v_{(n-i+1)(j)}$ , where $i<j$.
\item
We assume that each node $l$ in $K_n$ corresponds to a vehicle $c_l$ , and each edge between two nodes $\alpha$ and $\gamma$ in $K_n$ corresponds to the collision of the respective vehicles $c_{\alpha}$,$c_{\gamma}$.  
\end{enumerate}
$\therefore$ We obtain the corresponding $TC = \left\langle OWRN, C \right\rangle $ of $K_n$.
\end{proof}

\begin{figure}[htb!]
\centering
\includegraphics[width=0.99\linewidth]{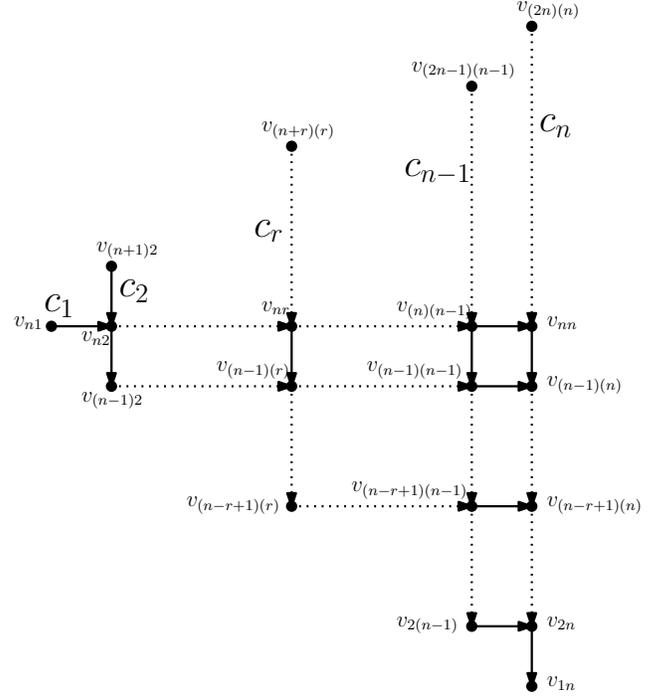}
\caption{Paths for vehicles $\left\lbrace c_1, c_2, \dots, c_n \right\rbrace$ in the $TC$ obtained from Lemma~\ref{lem:7}}
\end{figure}

Now to reduce any simple graph G, we first compute the corresponding TC for the complete graph $K_n(G)$. We then introduce 4 equi-spaced roads with directions (${d_k}$ 's) $\left\lbrace 0,1,0,1 \right\rbrace$ between every two adjacent roads $X_i$ ,$X_{i+1}$ and $Y_j$, $Y_{j+1}$, respectively, in the above formed OWRN, the path of each vehicle is to be modified accordingly.

We define method \Call{Delay}{$\alpha, \beta, P_i, \Delta$}, where $\alpha$ and $\beta$ are the two vertices in the path of $c_i$, and $\Delta$ is the total number of delays, which modify the path $P_i$ to introduce a $\Delta$ number of small time delays in between the vertices $\alpha$,$\beta$, this delay will also be propagated to all the successive vertices of $\beta$ in $P_i$. 

\begin{algorithm}
\caption{Delay method}\label{alg:delay}
Input: $P_r = (\dots,\alpha\dots,\beta\dots$), no.of delays $\Delta$\\
Output: $P_r$ after introduction of $\Delta$ delays
\begin{algorithmic}
\SetAlgoLined
\Procedure{Delay}{$\alpha,\beta,P_r,\Delta$}
	\State \If{$\Delta = 0$}{ \textbf{return}}
	\State $\gamma_1,\gamma_2$ are two successive vertices of $\alpha$ in $P_r$ 
	\State $\epsilon_1 \neq \gamma_2$, is a vertex $\mid$ there is an edge $\gamma_1 \to \epsilon_1$
	\State $\epsilon_2$ is a vertex $\mid$ $\epsilon_1 \to \epsilon_2$, $\epsilon_2 \to \gamma_1$
	\State $P_r =(\dots,\alpha,\gamma_1,\epsilon_1,\epsilon_2,\gamma_2\dots,\beta\dots)$
	\State \Call {Delay}{$\gamma_2,\beta,P_r,\Delta - 1$}
\EndProcedure
\end{algorithmic}
\end{algorithm}
\begin{figure}[htb!]
\centering
\includegraphics[width=0.78\linewidth]{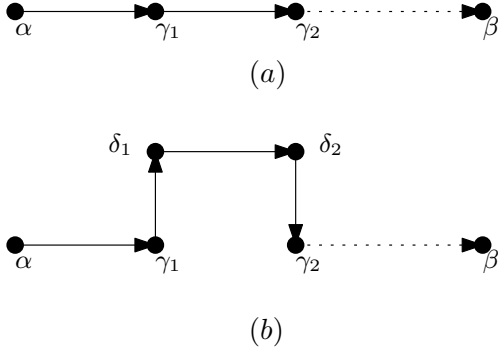}
\caption{ Path from $\alpha$ to $\beta$ { } $(a)$ Before delay introduction, $(b)$ After delay introduction}
\end{figure}

\begin{algorithm}
\caption{Collision method}\label{alg:collision}
\begin{algorithmic}
\SetAlgoLined
\Procedure{collisionVertex}{$c_i,c_j$}
	\State \If{$i = j $}{
		\State \textbf{return} \Call{collisionVertex}{$c_i,c_{j+1}$}
	}
	\State \ElseIf{$i = 0$} {\State \textbf{return} Starting Vertex of $c_j$}
	\State \textbf{return} Common vertex in $P_i$,$P_j$
\EndProcedure
\end{algorithmic}
\end{algorithm}

The method \Call{collisionVertex}{$c_i,c_j$}($i \neq j)$ will return the common vertex through which both the vehicles travel. In the base case($i = 0$) the Collision method returns the starting vertex of the vehicle $c_j$.

The following algorithm makes use of the above mentioned methods to construct the required TC by introducing some number of delays in the path of each vehicle.

\begin{algorithm}
\caption{Reduction algorithm}\label{alg:reduce}
\begin{algorithmic}
\SetAlgoLined
\Procedure{Reduce}{$c_i,c_j$}
	\State \If {$i > j$ or $i = 0 $} {
	\State \textbf {return} $0$
	}
	\State $\Delta$ $=$ \Call{Reduce}{$c_{i-1},c_j$}
	\State \Call {Reduce}{$c_i$,$c_{j-1}$}
	\State $\alpha$ = \Call{collisionVertex}{$c_{i-1},c_j$}
	\State $\beta$ = \Call {collisionVertex}{$c_i,c_j$}
	\State \If{\Call{hasEdge}{$i,j$}}{
	\State \Call{Delay}{$\alpha,\beta,P_j,i-\Delta$}	
	\State \textbf{return} $i$
	}
	\State \ElseIf{$i - \Delta > 1$}{
	\State \Call{Delay}{$\alpha,\beta,P_j,i-(\Delta+1)$}
	\State \textbf{return} $i-1$
	}
	\State \textbf{return} $\Delta$
\EndProcedure
\end{algorithmic}
\end{algorithm}

The reduction algorithm is constructed using the following properties:
\begin{enumerate}[{Property} 1:]
\item
The number of delays introduced in the path of a vehicle $c_i$ is equal to $i$.
\item
If there is an edge between two nodes $i,j$, $j > i$ in G, then $c_i$ and $c_j$ will have collision in the TC. The number of delays introduced in the path $P_j$ before the collision of $c_i$,$c_j$ is $i$.
\item
If there is no edge between two nodes $i,j$, $j > i$ in G, then $c_i$ and $c_j$ will not have a collision in the TC. The number of delays introduced in the path $P_j$ before the collision of $c_i$,$c_j$ is $i-1$.
\end{enumerate}

The method \Call{hasEdge}{$i,j$} will return value true if there is an edge between $i$ and $j$ in the graph G, else false.
\begin{lemma}
\label{lem:6}
The maximum number of delays introduced between the two collision vertices $\alpha$ and $\beta$ as defined in the reduction algorithm, will be two.
\end{lemma}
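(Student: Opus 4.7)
My plan is to appeal directly to Properties 2 and 3 listed for the reduction algorithm, using them to count, for each of the two collision vertices $\alpha$ and $\beta$, the number of delays inserted into $P_j$ strictly before that vertex. Since every call to \Call{Delay}{$\alpha,\beta,P_j,\cdot$} places its new delays into the stretch of $P_j$ between $\alpha$ and $\beta$, the quantity the lemma asks me to bound is exactly the difference of these two prefix counts.

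First I would tabulate the prefix-delay count at each vertex. By Properties 2 and 3 applied to the pair $(i-1,j)$, the number of delays present in $P_j$ before $\alpha$ is $i-1$ if the edge $(i-1,j)$ belongs to $G$, and $i-2$ otherwise. Applying the same two properties to the pair $(i,j)$, the number of delays present in $P_j$ before $\beta$ is $i$ if the edge $(i,j)$ belongs to $G$, and $i-1$ otherwise.

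Next I would enumerate the four cases determined by the presence or absence of the edges $(i-1,j)$ and $(i,j)$ and subtract the first count from the second, obtaining the values $1$, $0$, $2$, and $1$ respectively. The maximum across all four cases is $2$, which is the claimed bound.

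The one place where I expect a mild technicality, rather than a real obstacle, is the boundary case $i-1=0$: here the base case of \Call{collisionVertex}{$\cdot,\cdot$} forces $\alpha$ to be the starting vertex of $c_j$, so the prefix count at $\alpha$ is simply $0$, and the same case split on the single edge $(1,j)$ gives at most $1$ delay before $\beta$. Thus the bound of two delays holds uniformly for every invocation of \Call{Delay}{$\alpha,\beta,P_j,\cdot$} made by \Call{Reduce}{$\cdot,\cdot$}.
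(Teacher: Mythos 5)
Your proposal is correct and takes essentially the same approach as the paper: both arguments invoke Properties 2 and 3 to bound the cumulative delay counts at the two consecutive collision vertices (one count being $i-1$ or $i-2$, the other $i$ or $i-1$, up to the paper's index shift) and conclude that their difference is at most two. Your explicit four-case enumeration and the treatment of the boundary case $i-1=0$ are slightly more detailed than the paper's proof, but the underlying idea is identical.
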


\begin{proof}
The proof of this lemma follows from the above stated properties.
The number of delays introduced in the path $P_j$, before collision of vehicles $c_i$ and $c_j$ is either $i$,$i-1$.
The number of delays introduced in the path $P_j$, before collision of vehicles $c_{i+1}$ and $c_j$ is either $i+1$,$i$.

$\therefore$ the maximum number of delays that can be introduced between $\alpha$ and $\beta$ is two.
\end{proof}

From the above Lemma and the reduction algorithm, we have the following Lemma
\begin{lemma}
\label{lem:7}
The above Reduction algorithm can be solved using Dynamic Programming approach in polynomial-time $\mathcal{O}(n^2)$, and the space complexity of both TC and OWRN created is $\mathcal{O}(n^2)$. 
\end{lemma}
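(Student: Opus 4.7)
The plan is to bound the running time of Algorithm~\ref{alg:reduce} by memoizing on the pair $(i,j)$, and to bound the sizes of the OWRN and the paths produced by the construction directly from the parameters used in Lemma~\ref{lem:5} and the delay-insertion rule.

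First I would observe that the recursion in \textsc{Reduce}$(c_i,c_j)$ branches on the two arguments \textsc{Reduce}$(c_{i-1},c_j)$ and \textsc{Reduce}$(c_i,c_{j-1})$, so every distinct subproblem is indexed by a pair $(i,j)$ with $1\le i\le j\le n$. Standard memoization therefore yields at most $\mathcal{O}(n^2)$ distinct calls. The main obstacle in the time bound is showing that each such call does only $\mathcal{O}(1)$ work; the only non-trivial step inside a call is the invocation of \textsc{Delay}. The recursive \textsc{Delay} procedure runs in time proportional to its parameter $\Delta$, so the argument hinges on Lemma~\ref{lem:6}, which asserts that the number of delays inserted between the successive collision vertices $\alpha$ and $\beta$ is at most two. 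Consequently each \textsc{Delay} call executes a constant number of edits, and \textsc{collisionVertex} and \textsc{hasEdge} are $\mathcal{O}(1)$ since the collision location $v_{(n-i+1)j}$ is available in closed form from the construction in Lemma~\ref{lem:5}. Combining these gives total time $\mathcal{O}(n^2)$.

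For the space bound on the OWRN, I would count the roads used in the construction of Lemma~\ref{lem:5}: $2n$ horizontal and $n$ vertical roads, to which the reduction adds four equi-spaced roads between every pair of adjacent original roads. The number of roads in each direction therefore remains $\mathcal{O}(n)$, so the vertex set $W_x\times W_y$ has size $\mathcal{O}(n^2)$. For the TC, each original path $P_i$ contains $\mathcal{O}(n)$ vertices; by Property~1 the algorithm injects exactly $i\le n$ delays into $P_i$, each of which, by inspection of \textsc{Delay}, appends $\mathcal{O}(1)$ extra vertices. Hence every modified path has length $\mathcal{O}(n)$, and the $n$ paths together occupy $\mathcal{O}(n^2)$ space.

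The step I expect to require the most care is justifying that the bounded-delay property (Lemma~\ref{lem:6}) is actually applied on every call issued by \textsc{Reduce}. Concretely, I would verify that whenever \textsc{Delay} is invoked with argument $i-\Delta$ or $i-(\Delta+1)$, the preceding recursive call \textsc{Reduce}$(c_{i-1},c_j)$ has already produced $\Delta\in\{i-1,i-2\}$ (by Properties~2 and~3), so the delay count passed in is at most two. Once this is checked, the $\mathcal{O}(n^2)$ time and space bounds follow immediately from the memoized recursion and the road/path counts above.
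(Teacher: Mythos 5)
Your proposal is correct and follows exactly the route the paper intends: the paper states this lemma with no written proof at all, merely remarking that it follows ``from the above Lemma and the reduction algorithm,'' and your argument --- memoizing \textsc{Reduce} on the pair $(i,j)$ to get $\mathcal{O}(n^2)$ subproblems, invoking Lemma~\ref{lem:6} to bound each \textsc{Delay} invocation by a constant number of edits, and counting roads and path vertices for the $\mathcal{O}(n^2)$ space bounds --- is precisely the justification the paper leaves implicit. Your closing check that $\Delta\in\{i-1,i-2\}$ (via Properties~2 and~3), so that each call passes at most two delays, is the right place to focus the care, and with it the argument is complete.
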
 

\begin{lemma}
\label{lem:8}
If $C_{sub}$ be any subset of C in TC such that $TC_{new} = \left\langle OWRN,C_{sub} \right\rangle $ is collision-free, then $C_{sub}$ corresponds to Independent Set of G.
\end{lemma}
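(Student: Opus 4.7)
The plan is to exploit the bijection $i \leftrightarrow c_i$ between nodes of $G$ and vehicles of $C$, and show that after the reduction the collision relation on $C$ is exactly the edge relation on $G$. Once that equivalence is established, the statement is immediate: a collision-free subfamily of vehicles corresponds to a subset of nodes with no pair of endpoints of any edge, i.e.\ an independent set.

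First I would unwind the construction. By Lemma~\ref{lem:5}, the starting configuration for $K_n \supseteq G$ produces a collision between every pair $(c_i,c_j)$, $i<j$, at a unique vertex $v_{(n-i+1)(j)}$. Call that vertex $\beta_{ij}$. I would then inspect \textsc{Reduce} and note that, by Properties~2 and~3, the total number of delays inserted in $P_j$ strictly before $\beta_{ij}$ equals $i$ if $ij\in E(G)$ and $i-1$ if $ij\notin E(G)$, while, by Property~1 applied to $c_i$, the number of delays inserted in $P_i$ strictly before $\beta_{ij}$ equals $i$ (all delays of $c_i$ happen before it leaves its own row at $v_{(n-i+1)(i)}$, hence before any meeting with a later $c_j$). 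Consequently, at $\beta_{ij}$ the two vehicles arrive simultaneously iff the delay counts match, which happens iff $ij \in E(G)$.

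Next I would rule out collisions at unintended vertices. Each delay inserted by Algorithm~\ref{alg:delay} routes a vehicle into the four-road ``detour'' strip freshly introduced between consecutive original roads; by construction these strips are disjoint across distinct pairs of consecutive roads, and by Lemma~\ref{lem:6} at most two delays are inserted between $\alpha$ and $\beta$ on any given path, so the detour uses only the four dedicated roads and returns to the original path. A short case check shows that no pair of vehicles can meet on a detour road (each detour is traversed by a single vehicle, at a shifted time) and that the only candidate meeting point of $c_i$ and $c_j$ is still the original collision vertex $\beta_{ij}$. Therefore collisions in the final $TC$ occur exactly at the vertices $\beta_{ij}$ with $ij \in E(G)$.

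With this equivalence in hand the lemma follows in one line. Let $V_{sub}=\{\, i : c_i \in C_{sub}\,\}$. For any two distinct $i,j \in V_{sub}$, collision-freeness of $\langle OWRN,C_{sub}\rangle$ forces $c_i$ and $c_j$ not to collide, hence by the equivalence $ij \notin E(G)$, so $V_{sub}$ is independent in $G$. The main obstacle, as is often the case with geometric gadget reductions, is the second step above: a rigorous verification that the detour machinery of \textsc{Delay} does not accidentally synchronize two vehicles at some auxiliary intersection. I would handle this by a careful bookkeeping of arrival times modulo $\delta/\omega$ on each of the $2n$ horizontal and $n$ vertical original roads together with the $4$ detour roads inserted between each consecutive pair, using Lemma~\ref{lem:6} to bound the total number of delays and hence the total timing offset of every vehicle.
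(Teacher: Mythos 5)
Your proposal is correct and follows essentially the same route as the paper: both rest on the correspondence, built into the reduction, that an edge $ij$ of $G$ forces a collision of $c_i$ and $c_j$, so a collision-free $C_{sub}$ pulls back to an independent set. The paper's own proof is just a two-sentence assertion of this correspondence; your additional bookkeeping (matching delay counts at $\beta_{ij}$ and ruling out stray collisions on the detour roads) fills in details the paper leaves implicit rather than taking a different approach.
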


\begin{proof}
Since $TC_{new}$ is collision-free, so no two nodes in the graph G, which corresponds to respective cars in $C{sub}$,  consists of an edge. Thus, we can claim that $C_{sub}$ corresponds to an independent set in G.
\end{proof}

From \textit{Lemma~\ref{lem:8}} we can say that maximum $C_{sub}$ corresponds to \textit{Maximum Independent Set} in $G$. Now, using \textit{Lemma~\ref{lem:7}} and \textit{Lemma~\ref{lem:8}} we can prove that the traffic configuration problem is \textit{NP-Hard}.

\subsubsection{Restricted Version}
If we constrain our vehicles to move in a straight line motion, then the corresponding graph to $TC$ will be a Bipartite Graph. And \textit{Maximum Independent Set} of a Bipartite Graph can be computed using Konig's Theorem and Network-Flow Algorithm in polynomial-time. Hence, the restricted version of the problem is solvable in polynomial-time.

\subsection{Shortest Path Properties}
Suppose in a city of only One Way Road Network ,a person wants to travel from one point to another with minimum distance. Now, the objective would be to compute the shortest path to the destination in least possible time. Designing efficient algorithms to compute the shortest path in a One Way Road Network would be useful in many applications in the areas of facility location, digital micro-fluidic bio-chips,etc. \\
The length of the shortest path between two vertices in a OWRN may not be the Manhattan distance. There may be a pair of neighbouring vertices which are the farthest pair of vertices in the OWRN metric. A \textit{turn} in a path is defined when two consecutive pair of edges are from different roads. We have the following Lemmas for shortest path between any two vertices in a OWRN:

\begin{lemma}
\label{lem:9}
Between any pair of vertices $u$, $v$ , there exists a shortest path of at most four turns.
\end{lemma}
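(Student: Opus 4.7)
The plan is to prove the bound by a case analysis on the directions of the four roads incident to $u$ and $v$, after normalizing by the reflective symmetries of the OWRN. Write $u = v_{i_1 j_1}$ and $v = v_{i_2 j_2}$, and by possibly reflecting across a vertical or horizontal axis of the network (which preserves path lengths and merely relabels directions) assume $i_1 \leq i_2$ and $j_1 \leq j_2$. The only edges leaving $u$ lie on $X_{i_1}$ and $Y_{j_1}$ (in directions $d_{i_1}$ and $d_{j_1}$), and the only edges entering $v$ lie on $X_{i_2}$ and $Y_{j_2}$, so the first and last segments of any path from $u$ to $v$ each come from a two-element set. This reduces the problem to a small finite family of direction configurations to analyze.

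First I would dispatch the one-turn cases: when $d_{i_1}$ carries us toward column $j_2$ and $d_{j_2}$ carries us toward row $i_2$, the path $u \to v_{i_1 j_2} \to v$ is a Manhattan shortest path with a single turn; symmetrically when $d_{j_1}$ and $d_{i_2}$ cooperate. If neither one-turn option is available, at least one of $\{d_{i_1}, d_{j_2}\}$ and at least one of $\{d_{j_1}, d_{i_2}\}$ points the ``wrong'' way, leaving a handful of essentially distinct sub-cases. For each I would construct an explicit path with $2$ or $3$ turns by inserting a single intermediate horizontal road $X_{i'}$ or vertical road $Y_{j'}$ whose direction cooperates. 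In the residual, genuinely adversarial sub-case, where all four incident directions point away from the opposite endpoint, I would invoke strong connectivity and use the boundary cycle guaranteed by Lemma~\ref{lem:3} and Lemma~\ref{lem:4}: follow $X_{i_1}$ or $Y_{j_1}$ until it meets a boundary road (segment $1$), traverse the boundary through at most two corners (segments $2$ and $3$), then turn off onto a road leading into $v$ (segment $4$) and finally onto $X_{i_2}$ or $Y_{j_2}$ (segment $5$). Five segments means at most four turns, matching the claimed bound.

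The main obstacle is ensuring that the path produced in each case is genuinely a \emph{shortest} path, not merely some path of at most four turns. I would address this by choosing every intermediate road to be the closest one to $u$ (or to $v$) satisfying the required direction property, and then arguing that any competing path respecting the direction constraints must pay at least the same detour: the mismatches at the endpoints force a minimum amount of ``wrong-direction'' travel, and once that cost is absorbed the remaining portion of any path differs from a rectangular Manhattan route only in the corners, which our construction already makes tight. A secondary subtlety is verifying that the intermediate roads required in the hard cases actually exist, which follows from the boundary cycle together with the reachability guarantees of Lemma~\ref{lem:1} and Lemma~\ref{lem:2}.
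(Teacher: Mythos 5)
Your strategy is genuinely different from the paper's: you fix candidate paths determined by the directions of the four roads incident to $u$ and $v$ and then try to certify that one of them is shortest, whereas the paper starts from an \emph{arbitrary} shortest path and argues that any configuration with five or more turns can be re-routed, at no extra length, into one of a small catalogue of shapes with at most four turns (the configurations $(c)$--$(h)$ of Figure~\ref{fig:6}). The difference matters, because the paper's route never has to identify a shortest path explicitly; it only has to shorten or re-route a given one.

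The gap in your proposal is exactly the obstacle you name and then wave at: nothing in your construction shows that the path you build is a \emph{shortest} path. Two concrete failures. First, in the ``adversarial'' sub-case you route through the boundary cycle; but when $u$ and $v$ sit deep inside a large grid with cooperating roads nearby, the boundary detour is far longer than the true shortest path, so the object you exhibit does not witness the lemma. Replacing the boundary by ``the closest cooperating road'' repairs that instance but not the argument: you would still have to show that a greedy choice of one horizontal and one vertical detour road dominates \emph{every} competing path, including paths that trade a horizontal detour for a vertical one, combine both into a spiral (these are precisely the genuinely four-turn shortest shapes $(g)$ and $(h)$), or zigzag through many intermediate roads. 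Your sentence that ``the remaining portion of any path differs from a rectangular Manhattan route only in the corners'' is essentially a restatement of the lemma rather than a proof of it. To close the gap you would either need a lower bound on the length of every $u$--$v$ path in each direction configuration (a nontrivial exchange argument), or you should fall back on the paper's approach of showing that any shortest path with five or more turns admits an equal-length re-routing with fewer turns.
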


\begin{proof}
We can observe that a shortest path of $k$ turns is geometrically similar to an addition of one more turn at the end of shortest path of $k-1$ turns, And the path with $k$ turns can be reduced if it's sub-path with $k-1$ turns can be reduced.

\begin{enumerate}
\item The path with zero turns is a straight line and the path with one turn is an L shaped, which are trivial.
\item The path with two turns, two configurations $(c),(d)$ are possible and are valid as shown in \textit{Figure~\ref{fig:6}}.
\item The path with three turns, two configurations$(e),(f)$ shown in \textit{Figure~\ref{fig:6}} are valid and the other two shown below are not valid as there exist another shortest path with 1 turn.
\begin{figure}[htb!]
\label{fig:4}
\centering
\includegraphics[width=0.85\linewidth]{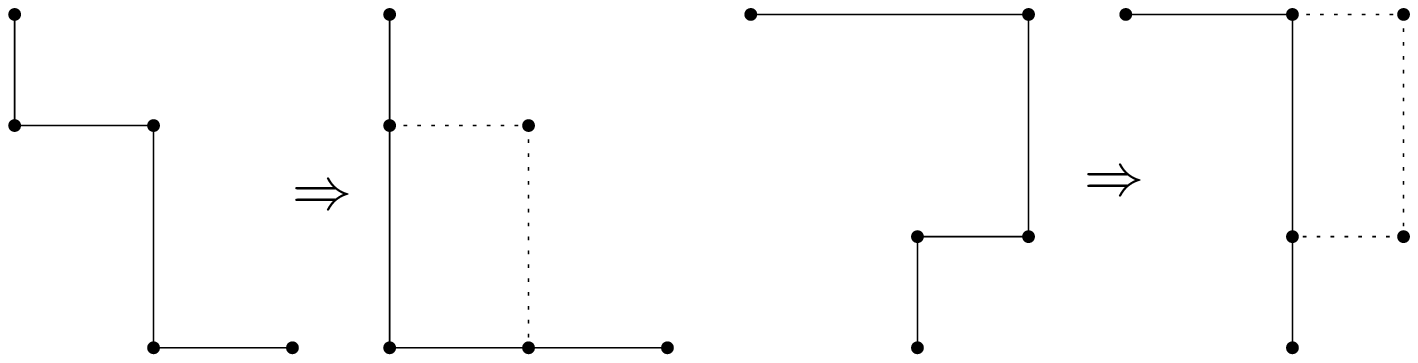}
\end{figure}
\item The path with four turns, from the above explanation it can be easily seen that only $(g),(h)$ are valid configurations with four turns.
\item The path with five turns, in the below figure we can see that a five turn path will become either a one turn or four turn.
\begin{figure}[htb!]
\label{fig:5}
\centering
\includegraphics[width=0.85\linewidth]{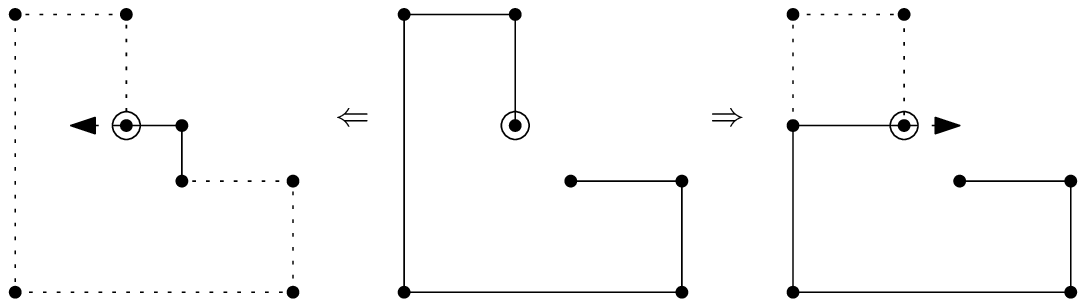}
\end{figure}
similarly we can see that for any five turn path there exist a shortest path with less number of turns.
\end{enumerate}
Hence we can see that there always exist a shortest path with at most 4 turns.
\end{proof}

From \textit{Lemma~\ref{lem:9}} we observe that there exist a shortest path between every pair of vertices in the OWRN which will be a rotationaly symmetric to one of the paths shown below:
\begin{figure}[htb!]
\label{fig:6}
\centering
\includegraphics[width=0.95\linewidth]{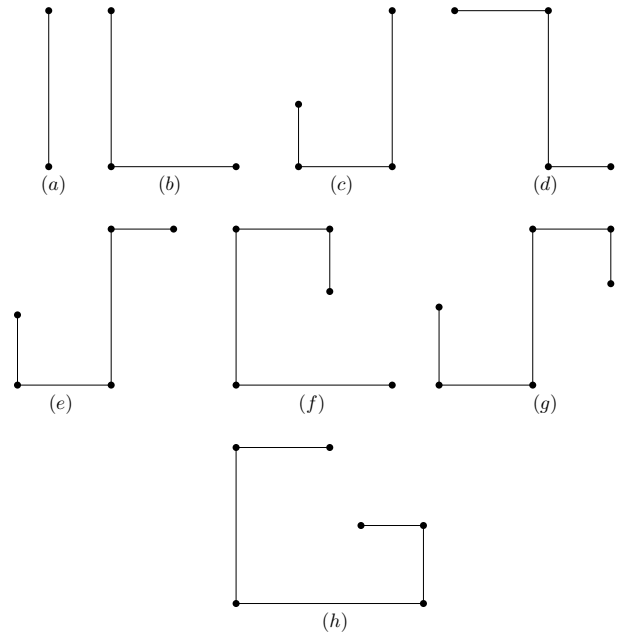}
\caption{Shortest path configurations}
\end{figure}

\begin{lemma}
The upper bound on the length of the shortest path between any pair of vertices $u$, $v$ is the perimeter of the boundary of the OWRN.
\end{lemma}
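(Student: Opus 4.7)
The plan is to apply Lemma~\ref{lem:9} together with a direct geometric bound on each of the canonical shortest-path shapes. Let $W$ denote the total horizontal extent of the OWRN and $H$ its vertical extent, so that the perimeter equals $P = 2(W+H)$. By Lemma~\ref{lem:9}, between any pair $u,v$ there exists a shortest path $\pi$ using at most four turns whose shape, up to rotation and reflection, is one of those displayed in Figure~\ref{fig:6}. It therefore suffices to bound the length of each of these shapes by $P$.

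For the low-turn cases the bound is immediate: a straight (0-turn) path has length at most $\max\{W,H\}$; an L-shaped (1-turn) path has length at most $W+H$; and a U- or Z-shaped (2-turn) path is contained in a sub-rectangle of the OWRN and has length at most $2W+H$ or $W+2H$. Each of these is already at most $P$. For the spiral-like 3- and 4-turn cases I would first establish a side lemma stating that in a shortest path no road is traversed twice, since otherwise the detour between two uses of the same road could be short-cut along the repeated road itself. This forces the (at most three) horizontal segments of $\pi$ to lie on distinct horizontal roads, and similarly for the vertical segments.

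Combining this with a projection argument gives the desired bound. Each horizontal segment has length at most $W$ and its projection onto the x-axis lies inside $[0,W]$; for a spiral shape the middle horizontal segment has x-projection contained in the union of the two outer x-projections, so the total horizontal length of $\pi$ is at most $2W$. The analogous vertical bound yields total vertical length at most $2H$, and summing yields $|\pi| \le 2W + 2H = P$, as required.

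The main obstacle will be making the spiral projection argument airtight for the 4-turn configurations $(g),(h)$ of Figure~\ref{fig:6}: one must verify that the middle segment in each direction is genuinely sandwiched between the two outer parallel segments, so that its projection adds nothing new. This is a straightforward case check on the two canonical shapes, but it is the only place where the proof uses more than the crude bound ``each segment fits in the OWRN.'' Once this observation is in hand, the length bound follows uniformly in all cases and matches the perimeter, with equality approached only when $\pi$ essentially traces the boundary cycle guaranteed by Lemma~\ref{lem:4}.
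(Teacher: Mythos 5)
Your overall strategy (reduce to the at-most-four-turn canonical shapes of Lemma~\ref{lem:9} and bound each shape by the perimeter) is a reasonable reading of what the paper intends; the paper itself only says the result ``follows from'' Lemmas~\ref{lem:1}, \ref{lem:2} and \ref{lem:9} and supplies no detail, so you are attempting the part that is actually nontrivial. The $0$-, $1$-, $2$- and $3$-turn cases are fine: with at most two segments of each orientation the crude bound ``each horizontal segment is at most $W$, each vertical at most $H$'' already gives $2W+2H$.

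The gap is in the four-turn case, which is the only case where anything needs to be proved, since five segments give the crude bound $3W+2H$ (or $2W+3H$). Your projection argument does not close it. First, the inference is invalid even when the containment holds: ``the middle segment's $x$-projection is contained in the union of the two outer projections'' bounds the \emph{measure of the union} of the three projections by $W$, but the quantity you need to bound is the \emph{sum} $|H_1|+|H_2|+|H_3|$ of the traversed lengths, which counts overlapped intervals with multiplicity. If all three horizontal segments have nearly full-width projections, the containment is satisfied trivially yet the sum is close to $3W$, not $2W$. Second, the containment itself can fail for a spiral whose two outer segments have disjoint projections bridged by the middle one. What actually has to be shown is that a four-turn path in which the horizontal segments sum to more than $2W$ (or the vertical ones to more than $2H$) always admits a shortcut --- e.g.\ because some segment would have to cross an earlier segment of the path, allowing the enclosed loop to be excised, or because a shorter route via a boundary road exists by Lemmas~\ref{lem:1} and \ref{lem:2}. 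That self-intersection/shortcut analysis of configurations $(g)$ and $(h)$ is the real content of the lemma; your side remark that no road is used twice points in the right direction but is not by itself sufficient, and the ``straightforward case check'' you defer is precisely where the proof currently stands open.
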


\begin{proof}
The proof of this lemma follows from \textit{Lemma~\ref{lem:1}}, \textit{Lemma~\ref{lem:2}} and \textit{Lemma~\ref{lem:9}}.

\end{proof}

\section{Remarks}
We have shown all the possible configurations of the path, that connects two vertices in a OWRN. In the future we will extend this work to compute various kind of facility location problems on a OWRN. It will be interesting to investigate the time complexity of one-centre or k-centre problems with respect to OWRN metric. Other interesting problems may be to design an efficient data structure for dynamic maintenance of shortest path in directed grid graphs.

\small
\bibliographystyle{abbrv}


\end{document}